\algnewcommand{\algorithmicgoto}{\textbf{go to}}%
\algnewcommand{\Goto}[1]{\algorithmicgoto~\ref{#1}}%
\newcommand{\ra}{\ensuremath{\rightarrow}}
\newcommand{\Er}{\mathbf{E}}
\newcommand{\Ac}{\ensuremath{\mathcal{A}}}
\newcommand{\Bc}{\ensuremath{\mathcal{B}}}
\newcommand{\Gc}{\ensuremath{\mathcal{G}}}
\newcommand{\Pc}{\ensuremath{\mathcal{P}}}
\newcommand{\Rbb}{\ensuremath{\mathbb{R}}}
\newcommand{\Ibb}{\ensuremath{\mathbb{I}}}
\newcommand{\lb}{\ensuremath{\left(}}
\newcommand{\rb}{\ensuremath{\right)}}
\newcommand{\lbr}{\ensuremath{\left\{}}
\newcommand{\rbr}{\ensuremath{\right\}}}
\newtheorem{theorem}{Theorem}
\newtheorem{definition}{Definition}
\title[AAMAS-2023 Formatting Instructions]{Bayesian Rationality in Satisfaction Games}
\author{Langford White}
\affiliation{
  \institution{The University of Adelaide}
  \city{Adelaide}
  \country{Australia}}
\email{langford.white@adelaide.edu.au}
\author{Oskar Rynkiewicz}
\affiliation{
  \institution{IMT-Atlantique}
  \city{Brest}
  \country{France}}
\email{oskar.rynkiewicz@gmail.com}
\author{Duong Nguyen}
\affiliation{
  \institution{The University of Adelaide}
  \city{Adelaide}
  \country{Australia}}
\email{duong.nguyen@adelaide.edu.au}
\author{Hung Nguyen}
\affiliation{
  \institution{The University of Adelaide}
  \city{Adelaide}
  \country{Australia}}
\email{hung.nguyen@adelaide.edu.au}
\begin{abstract}
We introduce a new paradigm for game theory -- Bayesian satisfaction. This novel approach is a synthesis of the idea of Bayesian rationality introduced by Aumann, and satisfaction games. The concept of Bayesian rationality for which, in part, Robert Aumann was awarded the Nobel Prize in 2005, is concerned with players in a game acting in their own best interest given a subjective knowledge of the other players' behaviours as represented by a probability distribution. Satisfaction games have emerged in the engineering literature as a way of modelling competitive interactions in resource allocation problems where players seek to attain a specified level of utility, rather than trying to maximise utility. In this paper, we explore the relationship between optimality in Aumann's sense (correlated equilibria), and satisfaction in games. We show that correlated equilibria in a satisfaction game represent stable outcomes in which no player can increase their probability of satisfaction by unilateral deviation from the specified behaviour. Thus, we propose a whole new class of equilibrium outcomes in satisfaction games which include existing notions of equilibria in such games. Iterative algorithms for computing such equilibria based on the existing ideas of regret matching are presented and interpreted within the satisfaction framework. Numerical examples of resource allocation are presented to illustrate the behaviour of these algorithms. A notable feature of these algorithms is that they almost always find equilibrium outcomes whereas existing approaches in satisfaction games may not. 
\end{abstract}
\keywords{Satisfaction Game, Correlated Equilibria, Multi-agent Optimization}
\newcommand{\BibTeX}{\rm B\kern-.05em{\sc i\kern-.025em b}\kern-.08em\TeX}
\begin{document}


\pagestyle{fancy}
\fancyhead{}


\maketitle 


\section{Introduction}

\subsection{What are Satisfaction Games}

A {\it satisfaction game} (SG) is a strategic game in which players compete in a manner by which they each attempt to meet a satisfaction constraint. This is in contrast to the usual normal form game where players are trying to maximize their individual utilities. As in the normal form game \cite{Feud_Tir_2000}, there is a finite set of $N \geq 2$ players which we label as the set $\Pc = \{1, \ldots, N\}$. Each player $i \in \Pc$ has its associated finite set of {\it actions} $\Ac_i$. The set of {\it action profiles} of the game is the Cartesian product $\Ac = \Ac_1 \times \Ac_2 \times \cdots \Ac_N$. Following the usual abuse of notation, we'll often write an action profile $a \in \Ac$ as $a = (a_i,a_{-i})$, where $a_i \in \Ac_i$ and $a_{-i} \in  \Ac_{-i}$, where $\Ac_{-i} = \Ac_1 \times \cdots \times \Ac_{i-1} \times \Ac_{i+1} \times \cdots \times \Ac_N$. Denote by $\Delta(\Ac)$ the set of all probability mass functions (pmf) over $\Ac$, and similarly, for each $i$, $\Delta(\Ac_i)$ as the set of pmf over $\Ac_i$, and $\Delta(\Ac_{-i})$ as the set of pmf over $\Ac_{-i}$. \\

Rather than utility functions, a SG has a set of {\it correspondence functions}, one for each player $i$, $f_i : \Ac_{-i} \ra 2^{\Ac_i}$.  
A SG is then represented by the tuple $\Gc = \lb \Pc, \{\Ac_i, i \in \Pc\}, \{f_i, i \in \Pc\}\rb$. In strategic games such as a SG, we are generally interested in equilibrium behavior. Equilibria are action profiles whereby there is no incentive for a player to deviate from their action, given that the actions of the other players are unchanged. A (pure) {\it satisfaction equilibrium} (SE) of the SG $\Gc$ is an action profile whereby all players are satisfied, i.e. $a_i \in f_i(a_{-i})$ for all $i \in \Pc$. The concept of SG and SE was originally introduced in \cite{ross2006satisfaction} but in the less general setting of resource allocation. As argued in \cite{goonewardena2017generalized}, the concept of SE is somewhat restrictive, and they define the idea of a {\it generalized} SE (GSE) as an action profile whereby some players are satisfied and others are not, but none of the unsatisfied players can become satisfied by unilateral deviation from their specified actions. The above discussion concerns pure strategies; we'll discuss mixed strategies and SE/GSE subsequently.  

\subsection{Why are Satisfaction Games Important ?}

In many applications, players (agents or users) do not necessarily seek to maximize their individual utility, but rather may only wish to attain a certain level of utility. Most of the engineering literature we are aware of that uses the idea of SG, focuses on applications of resource allocation in wireless communications systems. For example, \cite{perlaza2010satisfaction,perlaza2011quality} consider power control for multiple users accessing an interference-limited channel. In \cite{goonewardena2017generalized,FASOULAKIS2019135}, several other aspects of management of a wireless network including energy management, admission control and channel allocation as well as power control are considered within a SG framework. Satisfaction ideas are also applied in \cite{arani2017minimizing} to address the problem of excessive switchings between different radio access technologies in heterogeneous wireless networks. Applications of SG to underwater communications are considered in \cite{pottier2018quality} and in linear control systems~\cite{8723508}.\\ 

In each of these applications, satisfaction is usually defined as a player attaining a specified level of utility (e.g. SNR, throughput) which is only one kind of SG. In these cases, utility functions are defined along with a set of minimum acceptable levels of utility for each player. If we denote the utility functions by $u_i :\Ac \ra \Rbb$ and the minimum required level of utility for player $i$ as $\Gamma_i$, then the correspondence functions are defined by $f_i \lb a_{-i} \rb = \lbr a_i \in \Ac_i : u_i(a_i,a_{-i}) \geq \Gamma_i \rbr$. An SE arises when all players achieve their desired minimum utility. Although most applications do have utility functions, it is of interest to retain the more general framework based on correspondence functions. \\ 

Another important aspect of SG is that methods for finding SE/GSE based on repetition are likely
to converge more quickly than those for normal-form games. This is because attaining satisfaction, or
at least, maximizing the probability of satisfaction, should require less effort than maximizing utilities.
We will demonstrate this property in the examples presented in sec. \ref{sec:examples}.

\subsection{Relevant Background Work\label{ssec:background}}

The concept of a SE is due to Ross and Chaib-draa \cite{ross2006satisfaction}, who introduced the idea in a more restrictive form where satisfaction is defined as attaining a specified level of utility. This definition was generalized by Perlaza {\it et al} in \cite{perlaza2010satisfaction,perlaza2011quality} using correspondence functions, without the requirement of any notion of utility. The formulation of \cite{ross2006satisfaction} is framed in the terminology of {\it repeated play}, and, in particular, considers equilibrium behavior whereby all players are satisfied and thus have no incentive to change their strategies. However, although repetition may provide a basis for computing SE, it is not necessary to invoke repetition to properly define the concept of SE. Indeed \cite{perlaza2010satisfaction,perlaza2011quality} formalize the idea of a game in satisfaction form, rather than the usual normal form which involves utility (or payoff) functions. The concept of an SE follows directly from the satisfaction form and does not involve introducing ideas of repetition. Satisfaction is defined solely in terms of the properties of the corresponding functions of the SG. \\ 

In \cite{goonewardena2017generalized}, the concept of a {\it generalized} SE (GSE) was introduced. A GSE is an equilibrium notion for a SG whereby not all players are satisfied, but those who are not satisfied have no incentive to unilaterally deviate from their specified actions, i.e. they are not able to become satisfied by unilateral action. More specifically, $a \in \Ac$ is a (pure) GSE if there is a disjoint partition of the set of players $\Pc = \Pc_u \cup \Pc_s$ where, for all $i \in \Pc_s$, $a_i \in f_i(a_{-i})$, and for all $i \in \Pc_u$, $f_i(a_{-i}) = \emptyset$. Clearly, every SE is a GSE (then $\Pc_s = \Pc$), but not conversely. This paper will be primarily concerned with GSE. \\

In understanding the concepts of SE and GSE, reference is made in the literature to the relationships between SE/GSE and Nash equilibria (NE). If we have a SG, then we can define utility functions $u_i : \Ac \ra \{0,1\}$, by $u_i(a)  = \Ibb \lbr a_i \in f_i(a_{-i}) \rbr$ \footnote{We use the notation $\Ibb\{p\}$ to give the value of one if the proposition $p$ is true and zero otherwise.}, i.e. a player receives a utility of one if it is satisfied under $a$, and zero otherwise. We will refer to these utility functions as the {\it natural utility functions} associated with an SG, and the normal form game $\Gc^\prime = \lb \Pc, \{\Ac_i, i \in \Pc\}, \{u_i, i \in \Pc\} \rb$ as the {\it natural normal form game} associated with a SG. In \cite{perlaza2011quality}, it's shown that any pure SE for $\Gc$ is a pure NE for $\Gc^\prime$ but the converse is not true in general. In \cite{goonewardena2017generalized}, it's shown that the pure GSE of $\Gc$ are identical to the pure NE of $\Gc^\prime$. This is because the unsatisfied players have no incentive to unilaterally deviate from the specified action profile even though they don't obtain the utility of one. Significantly, there is no general proof about the existence of GSE \cite{goonewardena2017generalized}. \\

In the case of mixed strategies, the situation is less clear. In \cite{goonewardena2017generalized}, it's claimed that the mixed GSE of $\Gc$ don't, in general, coincide with the mixed NE of $\Gc^\prime$. A counter-example using the constraint satisfaction correspondences is provided but there's no characterization of the class of correspondences for which these equilibria sets do coincide. In the mixed strategy case, it follows that expected utility, the utility now regarded as a random variable, under some $\pi \in \Delta(\Ac)$ is the probability of satisfaction under $\pi$, i.e. $\Er_\pi \lbr U_i \rbr = \Pr_\pi \lbr A_i \in f_i(A_{-i}) \rbr$.\footnote{We use upper case $A$ to denote a random variable taking values $a \in \Ac$, and upper case $U_i$ to denote the random variable taking values $u_i$.} The event $A_i \in f_i(A_{-i})$ means the disjoint union $\cup_{a_i \in \Ac_i} \ \{a_i\} \times f^{-1}(a_{i}) $ and $P_\pi$ is the probability function defined over $\Ac$ generated by $\pi$. \\

Naturally, methods for finding SE/GSE are of great significance. Unfortunately, in general, finding SE/GSE is an NP-hard problem (see \cite{goonewardena2017generalized} for more specific statements about complexity). Several algorithms for finding SE have appeared in the literature. In \cite{ross2006satisfaction} and \cite{perlaza2010satisfaction,perlaza2011quality} a procedure called {\it Pure Satisfaction Equilibrium Learning} (PSEL) is considered. Although the algorithm is the same in both \cite{ross2006satisfaction} and \cite{perlaza2010satisfaction,perlaza2011quality}, the latter is more general due to the more general definition of SE. PSEL is a procedure by which the SG is repeated in order to find an SE. The idea is that satisfied players keep playing the same (pure) strategy whilst the remaining players play a mixed strategy (exploration). Under certain sufficient conditions, the most important being that an SE exists, and that for every $i \in \Pc$, and all $a_{-i} \in \Ac_{-i}$, the set $f_i \lb a_{-i} \rb$ is non-empty, the PSEL algorithm converges to a CE provided all ``exploration'' probabilities are non-zero. A simple choice \cite{ross2006satisfaction} is to take a uniform distribution for each unsatisfied player over all its actions. However, we find the statements of convergence in both \cite{ross2006satisfaction} and \cite{perlaza2010satisfaction,perlaza2011quality} somewhat unclear. In particular, \cite{ross2006satisfaction} suggests that PSEL may converge but not to an equilibrium. We have observed this behavior in simulations. \\ 

Furthermore, it seems that in \cite{perlaza2011quality} (Proposition 10), that PSEL converges in a finite time almost surely. However, \cite{perlaza2011quality} then introduces the concept of a {\it clipping action} which is referred to in \cite{ross2006satisfaction} as a {\it satisfying strategy}. A player $i \in \Pc$ has a clipping action, $a_i^* \in \Ac_i$ if for every $a_{-i} \in \Ac_{-i}$, it holds that $a_i^* \in f_i \lb a_{-i} \rb$, i.e. player $i$ is always satisfied if it plays $a_i^*$ irrespective of what the other players do. Then \cite{perlaza2011quality} (Proposition 12) argues that if there's another player which cannot be satisfied if player $i$ plays its clipping action, PSEL ``doesn't converge to a SE with strictly positive probability''. Then it's suggested that this problem might be rectified by allowing satisfied players to change their strategies. This idea is certainly applied in our proposed approach. In \cite{perlaza2011quality}, a similar approach is used (Algorithm 1) with some additional refinements. This approach is termed a {\it satisfaction response} algorithm (SRA). A player among subset of unsatisfied players (possible asynchronously) chooses an action for which it becomes satisfied. However a general convergence proof is not provided rather one for some specific SG having a kind of ordering property on its action space. The SRA algorithm is applied with some modifications in \cite{pottier2018quality} where convergence is proven. We don't address this approach here, apart from remarking that in this algorithm, satisfied players don't change their actions. In terms of finding {\it mixed} SE/GSE, we concentrate on \cite{perlaza2011quality} because its definition of mixed SE/GSE is the most general. Again their algorithm (Algorithm 2) requires a kind of ordering property, this time on the space of pmfs over actions generated by the assumed ordering over actions. We don't address this approach here because in our approach, we don't wish to make any of these kinds of restrictive assumptions. \\

In the literature, there is also considerable attention to the so-called {\it efficient} SE (ESE). In such a framework, there is a cost associated with each player's actions, however these costs don't depend on the others' actions. The idea of ESE was first proposed in \cite{perlaza2011quality} in the context of equilibrium selection. However, naturally, there is a notion of ordering on actions imposed by the costs. Subsequently in \cite{el2017fully}, the authors considered in more detail the notion and properties of ESE and other concepts of equilibria associated with fading channels. Again, in these examples, orderings on actions was an important assumption. 

\subsection{Our Contributions and Significance}
In this paper, we are concerned with SG $\Gc$ regarded in the natural normal form $\Gc^\prime$. We study the relationships between GSE for $\Gc$ and {\it correlated equilibria} (CE) \cite{aumann1987correlated} for $\Gc^\prime$. CE is an expression of {\it Bayesian rationality}. A player is said to be rational if it acts in its own best interest given its information. In a Bayesian setting, the information each player has is represented by a pmf over actions. We firstly show that CE is a natural optimality concept for SG in the sense that if a mixed strategy profile $\pi$ is a CE for $\Gc^\prime$, even if $\pi$ is not a mixed GSE, it yields the highest probability of satisfaction for all players. We will make this concept more precise in sec. \ref{sec:CE_sat_games}. \\

There is a well-known algorithm for finding CE called Regret Matching (RM) \cite{hart2000simple}. Although the convergence properties of RM are unsatisfactory in that only {\it almost sure approachability} to the set of CE is guaranteed, if RM (in its standard form) converges to a point in CE, it is necessarily a pure NE. However all pure NE for $\Gc^\prime$ are necessarily pure GSE. So if $\Gc$ has at least one pure GSE, then RM will tend to find one. In sec. \ref{sec:examples}, we provide examples to illustrate the convergence properties of RM for the natural SG. We compare RM with the PSEL and SRA algorithms and show that although convergence properties depend on the kind of SG considered, RM (almost) always attains a GSE, whereas PSEL may fail to converge to a GSE or even an equilibrium. \\
 
The information structure of the SG is important when considering various algorithms for finding GSE. For example, in both PSEL and SRA, at the completion of each stage of the game, all players need to know if they are satisfied or not. Each player can deduce this given its own correspondence function and the actions of all other players. Alternatively, there may be a ``coordinator'' which knows all correspondence functions and all actions. The coordinator can then signal each player as to whether it is satisfied or not. This is like the ``one-bit feedback'' idea of \cite{pottier2018quality}.\\ 

In PSEL each player needs only to know whether it is satisfied or not; no additional information is needed. In this sense, PSEL is decentralized. For SRA, in order for an unsatisfied player to choose an action for which it becomes satisfied, it needs to know the actions of all the other players as well as its own correspondence function. Alternatively, a coordinator can specify to each unsatisfied player what actions it can take in order to be satisfied (i.e. specify a satisfaction response in the language of \cite{goonewardena2017generalized}). In the RM algorithm for the natural SG, each player needs to reason about whether it would be satisfied choosing all its actions given the actions of the other players. So it requires not only these other actions but also its own correspondence function. Thus RM requires similar information to SRA. These algorithms are thus not decentralized in this respect. \\

A reinforcement learning (RL) approach to RM (RMRL) has been proposed \cite{hart2001reinforcement} which has the property that each player needs not know the actions of the other players but can estimate regrets based on its own past actions and payoffs. In the context of the natural SG, with RMRL each player needs only to know whether it was satisfied or not for each action it played in the past. Thus RMRL requires the same information as PSEL but less than SRA. We will assess the performance of the RMRL algorithm in sec. \ref{sec:examples}.

\subsection{Paper Layout}
Following this extensive introduction (which we believe necessary to properly define the setting), sec. \ref{sec:CE_sat_games} considers the nature of CE for a SG and proposes a new definition for a mixed GSE. It is then shown that all such GSE are CE but not conversely. Thus the set of CE form new equilibria which have not previously been considered for SG. These equilibria possess an optimality property that no player can increase its probability of satisfaction by deviating from the equilibrium. In sec. \ref{sec:RM}, two algorithms (RM and RMRL) based on the notion of regret are introduced. These are specializations of existing algorithms to the SG case. In sec. \ref{sec:examples}, we consider two practical examples of resource allocation to examine the performance of the regret-based algorithms. The paper concludes with suggestions for ongoing related research.


\section{Correlated Equilibria for Satisfaction Games\label{sec:CE_sat_games}}

In this section, we explore the relationship between CE and SE/GSE for satisfaction games (where the binary utility is specified by the correspondence functions). In particular, we characterize CE as those mixed strategies for which each user cannot improve its probability of satisfaction (in a precise Bayesian sense) by deviating from the specified action. We need to provide an alternative definition of mixed SE/GSE to those currently used in the literature due to the specific information structure under which CE is defined. This definition is shown to be well-defined in a probabilistic sense by use of the correspondence functions of the SG. \\

Consider a general repeated (finite) strategic game in normal form $\lb \Pc, \{\Ac_i, i \in \Pc\}, \{u_i, i \in \Pc\} \rb$. A pmf $\pi \in \Delta(\Ac)$ is a {\it correlated equilibrium} (CE) \cite{aumann1987correlated} if for every player $i$, and every pair of actions $a_i, a_i^\prime \in \Ac_i$, it holds that
\begin{align}
0 & \geq \sum_{a_{-i}\in\Ac_{-i}} \pi(a_{-i}|a_i) \lb u_i(a_i^\prime,a_{-i}) - u_i(a_i,a_{-i}) \rb \nonumber \\ &  = \Er_{\pi|a_i} \lbr u_i(a_i^\prime,a_{-i}) \rbr - \Er_{\pi|a_i} \lbr u_i(a_i,a_{-i}) \rbr  = R_i^\pi(a_i,a_i^\prime) \ ,
\label{eq:CE1}
\end{align}
where $\pi(.|a_i)$ is the conditional pmf over $a_{-i} \in \Ac_{-i}$ given $a_i$, and $\Er_{\pi|a_i}$ denotes expectation with respect to $\pi(.|a_i)$. The quantities $R_i^\pi(a_i,a_i^\prime)$ are known as the {\it expected regrets} under $\pi$ for player $i$ not playing $a_i^\prime$ instead of playing $a_i$. Thus if for a given $a_i$, $R_i^\pi(a_i,a_i^\prime) > 0$, player $i$ would obtain a higher average reward (under $\pi$) if it substituted $a_i^\prime$ for $a_i$ given that it was expected to play $a_i$. \\

CE is a natural expression of Bayesian rationality \cite{aumann1987correlated}, although the set of linear inequalities in \eqref{eq:CE1} are a little tricky to interpret. One interpretation is to think of there being a ``co-ordinator'' of the game that makes recommendations on actions to the players. The co-ordinator draws from a mixed action profile $\pi$ and sends the realized actions $a_i$ individually to each player $i$. Each player knows $\pi$. If $\pi$ is a CE, then there is no incentive for any player to deviate from the recommended action $a_i$ given $\pi$. Actions $a_i$ which have zero probability need not be considered \cite{hart1989existence}. \\

Consider now the natural game $\Gc^\prime$ for a SE game $\Gc$, then from \eqref{eq:CE1}, a CE $\pi$ thus satisfies, for all players $i$, and all actions $a_i, a_i^\prime \in \Ac_i$,
\begin{align}
0 & \geq \Er_{\pi|a_i} \lbr \Ibb \lbr a_i^\prime \in f_i(a_{-i}) \rbr -  \Ibb \lbr a_i \in f_i(a_{-i}) \rbr \rbr \nonumber \\
& = \Pr_{\pi|a_i} \lbr a_i^\prime \in f_i(a_{-i}) \rbr -  \Pr_{\pi|a_i} \lbr a_i \in f_i(a_{-i}) \rbr \ ,
\label{eq:CE2}
\end{align}
where $\Pr_{\pi|a_i}$ are the probability functions defined over $\Ac_{-i}$ induced by the conditional pmf $\pi(.|a_i)$.
Thus under a CE of $\Gc^\prime$, no player can unilaterally increase its probability of ``satisfaction'' (by choice of its own action) by deviating from the specified action $a_i$ given the other players play according to $\pi$.\\ 

The concepts of mixed SE/GSE presented in \cite{perlaza2011quality} and \cite{goonewardena2017generalized} are different as pointed out in \cite{goonewardena2017generalized}. The mixed SE of \cite{perlaza2011quality} are mixtures of pure SE, whilst the mixed GSE of \cite{goonewardena2017generalized} are not, in general, mixtures of pure GSE. In the context of linking mixed GSE of $\Gc$ and CE of $\Gc^\prime$, we think of mixed GSE in the following way.
 
 \begin{definition}\label{defn1}
 A joint distribution $\pi \in \Delta(\Ac)$ is a mixed GSE for $\Gc$ if there is a partition $\Pc = \Pc_s \cup \Pc_u$ of the players such that (i) for every player $i \in \Pc_s$, $\Pr_{\pi} \lbr a_i \in f_i(a_{-i}) \rbr =1$, and (ii) for every player $i \in \Pc_u$, $\Pr_{\pi} \lbr a_i \in f_i(a_{-i}) \rbr =0$. 
 \end{definition}
 
\underline{Comments}
\begin{enumerate}
\item Let $\pi_i(a_i)$ be the marginal distribution of choosing $a_i$ under the joint distribution $\pi$. The probability of satisfaction for any player $i \in \Pc$ under joint mixed action profile $\pi \in \Delta(\Ac)$ is given by
\begin{align}
&\Pr_\pi \lbr a_i \in f_i(a_{-i}) \rbr  = \sum_{a \in \Ac} \ \pi(a) \,  \Ibb \lbr a_i \in f_i(a_{-i}) \rbr \nonumber \\
& = \sum_{a_i \in \Ac_i} \pi_i(a_i) \ \sum_{a_{-i} \in \Ac_{-i}} \pi(a_{-i}|a_i) \, \Ibb \lbr a_i \in f_i(a_{-i}) \rbr \nonumber \\
& = \sum_{a_i \in \Ac_i} \pi_i(a_i) \ \Pr_{\pi|a_i} \lbr a_i \in f_i(a_{-i})  \rbr \ ,
\label{eq:P_sat}
\end{align}
where only terms with $\pi_i(a_i) > 0$ are included in the outer sum (so that, in particular, the conditional pmfs $\pi(.|a_i)$ are defined). Suppose $i \in \Pc_s$, then, by defn. \ref{defn1}, the l.h.s. of \eqref{eq:P_sat} takes the value one, which implies that each of the terms $\Pr_{\pi|a_i} \lbr a_i \in f_i(a_{-i})  \rbr = 1$. Thus there is no incentive for player $i$ to unilaterally deviate from any recommended action $a_i$ (having non-zero probability). Similarly, suppose $i \in \Pc_u$, then the l.h.s. of \eqref{eq:P_sat} takes the value zero, which implies that each of the terms $\Pr_{\pi|a_i} \lbr a_i \in f_i(a_{-i}) \rbr = 0$. Thus player $i$ cannot become satisfied under any recommended action $a_i$ (having non-zero probability). Indeed, no such player can achieve a non-zero probability of satisfaction by choice of any such action. 

\item In the case of pure strategies, definition \ref{defn1} is consistent with the existing definition of pure GSE \cite{goonewardena2017generalized}. 

\item The definition of mixed GSE given in \cite{goonewardena2017generalized} is given in terms of correspondence functions defined over pmfs. A $\pi \in \Delta(\Ac)$ is a mixed GSE if either $\pi_i \in f_i(\pi_{-i})$ (player $i$ satisfied) or $f_i(\pi_{-i}) = \emptyset$ (player $i$ unsatisfied). Under this scenario, each satisfied player $i$ chooses its action according to the marginal distribution $\pi_i$. There's no manner by which unsatisfied users can choose an action to become satisfied. Indeed they don't even ``play''. Also actions are chosen by satisfied players {\it independently}. This is in contrast with the CE setting and definition~\ref{defn1} where an action profile $a$ is chosen jointly according to $\pi$ and each player plays the specified $a_i$. 
\end{enumerate}
 
 Our main result is :
\begin{theorem}\label{thrm1}
Any mixed GSE for $\Gc$ is a CE for $\Gc^\prime$. 
\end{theorem}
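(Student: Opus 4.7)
The plan is to verify the CE condition~\eqref{eq:CE2} for each player $i$ and each pair $a_i, a_i' \in \Ac_i$ with $\pi_i(a_i) > 0$, splitting the argument along the partition $\Pc_s \cup \Pc_u$ supplied by Definition~\ref{defn1}. In both cases I would lean on the convex-combination identity~\eqref{eq:P_sat} already used in Comment~1, which expresses the overall probability of satisfaction as the $\pi_i$-weighted average of the conditionals $\Pr_{\pi|a_i}\lbr a_i \in f_i(a_{-i}) \rbr$.

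For $i \in \Pc_s$, condition~(i) of Definition~\ref{defn1} fixes the overall satisfaction probability at $1$, and a convex combination of numbers in $[0,1]$ can attain $1$ only when every term with positive weight equals $1$. Hence $\Pr_{\pi|a_i}\lbr a_i \in f_i(a_{-i}) \rbr = 1$ for each $a_i$ with $\pi_i(a_i) > 0$, and the regret in~\eqref{eq:CE2} collapses to $\Pr_{\pi|a_i}\lbr a_i' \in f_i(a_{-i}) \rbr - 1$, which is trivially non-positive because probabilities are bounded by one.

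For $i \in \Pc_u$, the symmetric use of condition~(ii) forces $\Pr_{\pi|a_i}\lbr a_i \in f_i(a_{-i}) \rbr = 0$ throughout the support of $\pi_i$. To finish the CE inequality I still need $\Pr_{\pi|a_i}\lbr a_i' \in f_i(a_{-i}) \rbr = 0$ for every deviation $a_i'$. I would close this gap using the strengthening flagged by the ``Indeed'' clause of Comment~1: for $i \in \Pc_u$ the correspondence $f_i(a_{-i})$ is $\pi$-almost surely empty, which is the genuine mixed analogue of Goonewardena's pure-GSE requirement $f_i(a_{-i}^*) = \emptyset$. Under that reading $\lbr a_i' \in f_i(a_{-i}) \rbr$ is a $\pi$-null event, and its conditional probability under any $a_i$ with $\pi_i(a_i) > 0$ vanishes, yielding regret zero.

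The main obstacle I anticipate is precisely the unsatisfied case. The satisfied case is an almost automatic unpacking of~\eqref{eq:P_sat}, but the literal equation $\Pr_\pi\lbr a_i \in f_i(a_{-i}) \rbr = 0$ only rules out the recommended $a_i$ hitting $f_i(a_{-i})$ and leaves open the possibility that some deviation $a_i'$ hits it with strictly positive conditional probability; absent the empty-correspondence strengthening, one can exhibit joint distributions satisfying Definition~\ref{defn1}'s literal wording for which an unsatisfied player enjoys positive Bayesian regret and so violates~\eqref{eq:CE2}. I would therefore make this invocation explicit in the write-up rather than let it hide inside Comment~1, and tie it cleanly back to the pure-GSE case where the requirement is uncontroversial.
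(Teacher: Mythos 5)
Your proposal follows the same basic route as the paper's proof: split on the partition $\Pc_s \cup \Pc_u$, and for satisfied players extract $\Pr_{\pi|a_i}\lbr a_i \in f_i(a_{-i})\rbr = 1$ from the convex-combination identity \eqref{eq:P_sat}, after which the CE inequality \eqref{eq:CE2} is immediate since the deviation term is a probability bounded by one. That branch matches the paper essentially verbatim.

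Where you diverge is on the unsatisfied branch, and here your treatment is in fact more careful than the paper's own. The paper's proof states that for $i \in \Pc_u$, since $\Pr_{\pi|a_i}\lbr a_i \in f_i(a_{-i})\rbr = 0$ for every $a_i$ in the support of $\pi_i$, the inequality \eqref{eq:CE2} ``holds trivially.'' It does not: with the subtracted term equal to zero, \eqref{eq:CE2} demands $\Pr_{\pi|a_i}\lbr a_i^\prime \in f_i(a_{-i})\rbr = 0$ for \emph{every} deviation $a_i^\prime \in \Ac_i$, and the literal condition (ii) of Definition~\ref{defn1} controls only the recommended action, not deviations. Knowing $\Pr_{\pi|a_i^\prime}\lbr a_i^\prime \in f_i(a_{-i})\rbr = 0$ for other supported $a_i^\prime$ does not help either, since that is a statement about a different conditional distribution. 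You correctly identify that the argument goes through only under the strengthened reading---that for $i \in \Pc_u$ the set $f_i(a_{-i})$ is empty for $\pi$-almost every $a_{-i}$, which is the honest mixed analogue of the pure-GSE requirement $f_i(a_{-i}) = \emptyset$ and is what the ``Indeed'' clause of Comment~1 implicitly asserts without proof. Under that reading every event $\lbr a_i^\prime \in f_i(a_{-i})\rbr$ is $\pi$-null and the conditional regrets vanish, closing the argument. Your suggestion to make this hypothesis explicit in the statement or definition, rather than leaving it buried in a comment, is the right call: as written, Definition~\ref{defn1} alone does not suffice for the theorem, and your proposal supplies the missing ingredient that the paper's proof elides.
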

\begin{proof}
Suppose $\pi \in \Delta(\Ac)$ is a GSE for $\Gc$. Suppose a player $i$ is satisfied under $\pi$, then $\Pr_{\pi|a_i} \lbr f_i^{-1}(a_i) \rbr = 1$ for all actions $a_i$ with $\pi_i(a_i) > 0$. Thus, for any such $a_i$,
\begin{align*}
1 & = \Pr_{\pi|a_i} \lbr a_i \in f_i(a_{-i}) \rbr \geq \Pr_{\pi|a_i} \lbr a_i^\prime   \in f_i(a_{-i})\rbr \ ,
\end{align*}
for any action $a_i^\prime \in \Ac_i$. Thus \eqref{eq:CE2} holds. Similarly, for a player $i \in \Pc_u$, $\Pr_{\pi|a_i} \lbr a_i \in f_i(a_{-i}) \rbr = 0$ for all actions $a_i$ with $\pi_i(a_i) > 0$, so \eqref{eq:CE2} holds trivially. 
\end{proof}

\underline{Comments}
\begin{enumerate}
\item The converse to theorem \ref{thrm1} is not in general true, since a CE may be an equilibrium mixed strategy profile where players are neither satisfied nor unsatisfied.
\item Any pure CE is a Nash equilibrium (NE) for $\Gc^\prime$, and as such is a GSE for $\Gc$. However CE contains all mixed GSE as well as all mixed NE but as pointed out in \cite{goonewardena2017generalized}, these latter two sets need not coincide.
\item In this framework, CE is an expression of Bayesian satisfaction. No player has an incentive to deviate from its recommended action in the sense that it cannot increase its probability of satisfaction under the assumption that all players also play their recommended actions.
\end{enumerate}

\section{Regret Matching Based Algorithms\label{sec:RM}}
\subsection{Regret Matching for Satisfaction Games}
Regret Matching (RM) \cite{hart2000simple} is an iterative procedure for computing CE in a general normal-form game. The idea is that each player computes a matrix of estimated ``regrets'' which it updates each iteration. These regrets specify the (mixed) action that each player chooses at the next iteration. Convergence of the joint empirical distribution of action profiles (in a precise sense) to the set of CE is guaranteed almost surely. More specifically, for each iteration $t \geq 1$, each player $i$ computes
\begin{align}
R_i^t(a_i,a_i^\prime) & = \frac{1}{t} \ \sum_{n=1}^t \mathbb{I}\lbr s_i^n = a_i \rbr \ \lb u_i(a_i^\prime,s_{-i}^n ) - u_i(s^n) \rb \ ,
\label{eq:Regrets}
\end{align}
for $a_i, a_i^\prime \in \Ac_i$, where $s^n \in \Ac$ is the joint action played at iteration $n\geq 1$. The quantity $u_i(s^n)$ is the payoff player $i$ received at iteration $n$ and is known to player $i$. Thus at each iteration that player $i$ actually played action $a_i$, it reasons about what payoff it would have received if instead, it had played any of its other actions $a_i^\prime$ when all other players played the same actions. Positive values indicate a ``regret'' for player $i$ not having played $a_i^\prime$ rather than $a_i$ at each of those iterations.\\ 

At the next iteration, each player independently chooses its next action according to the probabilities \footnote{The notation $[x]_+ = \max\{x,0\}$.}
\begin{align}
\Pr \lbr s_i^{t+1} = a_i^\prime | s_i^t = a_i \rbr & = c_i^{t+1} \ \left[ R_i^t(a_i,a_i^\prime) \right]_+ \ ,
\label{eq:RM}
\end{align}
for all $a_i^\prime \neq a_i$, and the quantities $c_i^{t+1}$ are chosen so that the probability $\Pr \lbr s_i^{t+1} = a_i | s_i^t = a_i \rbr$ of playing the same action at the next iteration is positive. It's shown in \cite{hart2000simple} that the empirical distribution $\hat{\pi}_t(a) = | \{ s_n = a : n \leq t \} | / t$ converges to the set of CE for the game as $t \ra \infty$, where convergence means that the distance between $\hat{\pi}_t$ and the set of CE goes to zero almost surely. As also discussed in \cite{hart2005adaptive}, the convergence of RM can be problematic - if convergence to a point does occur than that point is a pure NE. However, the regrets generally decrease slowly to zero, and although the empirical distributions of plays approach the set of CE, there is often a switching between different outcomes.\\ 

For the natural SG, the regrets in \eqref{eq:Regrets} have the form
\begin{align}
R_i^t(a_i,a_i^\prime) = & \frac{1}{t} \ \sum_{n=1,s_i^n=a_i}^t   \mathbb{I}\lbr (a_i^\prime \in f_i(s_{-i}^n) \rbr \nonumber \\  
& - \frac{1}{t} \ \sum_{n=1,s_i^n=a_i}^t \mathbb{I}\lbr (a_i \in f_i(s_{-i}^n) )\rbr  \ .
\label{eq:Regrets_SG}
\end{align}
Examining the summand in \eqref{eq:Regrets_SG}, we see that at each iteration where player $i$ played action $a_i$, the second term is one if player $i$ is satisfied, and zero otherwise. This is the known ``payoff'' for player $i$ - whether it is satisfied at iteration $n$ or not. The first term will yield one if player $i$ would be satisfied replacing $a_i$ (the action it did play) with another $a_i^\prime$. Thus the summand can take values $-1$ (if it's currently satisfied playing $a_i$ and would become unsatisfied playing $a_i^\prime$ instead) ; +1 (if it's currently unsatisfied playing $a_i$ and would become satisfied playing $a_i^\prime$ instead) and zero for the other two possibilities. So positive regrets indicate a tendency to switch to actions which increase the probability of satisfaction. Notice that in order to reason about its satisfaction (or otherwise) is deviating from its actual play at each iteration $n$, each player needs to know the actions $s_{-i}^n$ played by the others. Thus RM is not a decentralized algorithm. 



\subsection{Reinforcement Learning Approach to Regret Matching for Satisfaction Games}
In order to yield a fully decentralized algorithm for computing CE, \cite{hart2001reinforcement} proposed a RM type procedure whereby estimated regrets can be computed using only a player's past history. The algorithm called Regret Matching with Reinforcement Learning (RMRL) replaces the regret calculations in a normal form game as in eqn. \eqref{eq:Regrets}, with
\begin{align}
\hat{R}_i^t(a_i,a_i^\prime) = & \frac{1}{t} \ \sum_{n=1}^t  \mathbb{I}\lbr s_i^n = a_i^\prime \rbr \, u_i(s^n ) \, \frac{ \Pr \lbr s_i^n = a_i \rbr}{\Pr \lbr s_i^n = a_i^\prime \rbr} \nonumber \\  
& - \frac{1}{t} \ \sum_{n=1}^t  \mathbb{I}\lbr s_i^n = a_i \rbr \, u_i(s^n) \ . 
\label{eq:Est_Regrets}
\end{align}

For SG, we use the binary utilities as in eqn. \eqref{eq:Regrets_SG}. 
In \eqref{eq:Est_Regrets} all quantities involved are known to player $i$ - each player needs only know if it is satisfied at each iteration or not. The probabilities appearing in \eqref{eq:Est_Regrets} are the empirical probabilities of play for player $i$ at stage $n$. In RMRL, some ``tremble'' is required to be added to the play probabilities in \eqref{eq:RM} in order to obtain sufficient ``exploration'' in the learning process. For RMRL, convergence of the empirical distributions on plays converges to the set of CE provided the level of the tremble decreases sufficiently as $t \ra \infty$ (see \cite{hart2001reinforcement} for details.) We also found in our simulations that it is desirable to ``switch off'' trembles after some defined number of iterations, however the full understanding of how to select the parameters for RMRL is still an important subject for study, and is application dependent.




\section{Numerical Examples\label{sec:examples}}

%

\subsection{Resource Allocation Games}

We consider a resource allocation game where there are $N$ agents competing for $M$ resources. Each agent $i$ has a minimum resource (satisfaction) level $\Gamma_i$, and each resource $j$ has a capacity of $C_j$ units. The action space of each agent is $\Ac_i = \{1, \ldots, M\}$ where action $j$ for agent $i$ means means that $i$ connects to resource $j$. If multiple agents connect to a given resource, then they are ``served'' in a randomly (uniform) chosen order. The resource serves each agent until it can no longer supply demand. An agent is satisfied if it attains its minimum resource level $\Gamma_i$ otherwise it is unsatisfied. There is no ordering on actions, and the agents do not know the resource levels $C_j$. 

In these simulations, we used $N=20$ agents and $M=10$ resources. The satisfaction levels $\Gamma_i$ for each agent were uniformly chosen on $[1,5]$. The resource capacities were uniformly distributed on $[0,10]$. The number of iterations $T = 5000$, and 250 independent realizations were averaged.
The average resource capacity = 5.87 and average demand = 3.57. 

\begin{table}[!t]
	\small 
	\begin{center}
		\begin{tabular}{c|cccc}
			Algorithm & Prob (Equil) & Prob (Sat) & Avg. Utility & Alloc. Eff. \\ \hline
			PSEL (unif) & 0.165 & 0.660 & 0.610 & 0.742 \\
			PSEL (reinf) & 0.095 & 0.630 & 0.576 & 0.700\\
			PSRA (1) & 0.110 & 0.632 & 0.581 & 0.705 \\
			PSRA (2) & 0.110 & 0.638 & 0.585 & 0.711 \\
			PSRA (10) & 0.110 & 0.647 & 0.600 & 0.724 \\
			RMRL & 0.730 & 0.750 & 0.705 & 0.858\\
			RM & 0.960 & 0.747 & 0.681 & 0.827 \\ 
			\hline
		\end{tabular}
	\end{center}
	\caption{Table shows the performance of PSEL, PSRA, RM and RMRL algorithms for a resource allocation game with 25000 iterations. Here there are $N=20$ agents accessing $N=10$ resources. Average resource capacity = 5.87 and average demand = 3.57.\label{table1}}
\end{table}

We compared PSEL with uniform, and ``reinforced'' exploration \cite{perlaza2011quality}, PSRA (with 1, 2, and 10 maximum unsatisfied agents accessing resources at each iteration), RM and RMRL. Table \ref{table1} shows the probability an equilibrium was obtained, the probability of satisfaction, the average utility achieved and the average resource allocation efficiency. Although RM achieves higher levels of satisfaction, utility and resource efficiency, the most striking aspect of these results is that RM (almost) always achieves an equilibrium (i.e. stable) outcome whilst the other algorithms fail to do so most of the time. RMRL also performs well in comparison with PSEL and PSRA, although the selection of the adaptation parameters requires some care. 

In Table~\ref{table2} we repeat for the case where there are $N=30$ agents, but where all other parameters remain the same. The average resource capacity = 3.98 and the average demand = 3.50 (a more highly loaded case). Again, RM and RMRL are superior in terms of the probability of attaining an equilibrium and slightly better in terms of the other criteria.

\subsection{Satisfaction in RAT Selection}
In this example, we demonstrate the performance of RM and RMRL on a radio access technology (RAT) selection game~\cite{nguyen2017reinforcement,nguyen2018evaluating}, where players (network users) seek to achieve a certain minimum level of throughput. For simplicity, we assume that all the users have the same level of satisfaction (i.e. a common satisfaction threshold). We follow the same throughput models in~\cite{nguyen2017reinforcement} to simulate a user-specific throughput of a user associated with a RAT (either a WiFi or a LTE base station). In our simulation, there are 100 users located in the overlap coverage area of 10 base stations (BSs), in which half of them are WiFi and the rest are LTE. \\

To compute the RM algorithm, we assume that every BS sends a binary signal to all users, including the users that are currently not associated with it, to inform whether a user will become satisfied (i.e. a utility of one if satisfied) or unsatisfied (i.e. a utility of zero if not satisfied) by selecting it. As a result of this, the RAT selection problem can be considered as a satisfaction game with binary utility. A user running RM needs to know the actions of the other users in order to compute its regrets. A user running RMRL only needs to know if it is currently satisfied or not with its serving BSs, and the historical experiences for all the actions taken in the past. It requires to know neither its own utility function nor the actions taken by all the other players. Thus, the RMRL is a fully decentralized algorithm.\\

Figures~\ref{fig:1.5Mbps} --~\ref{fig:GSE} illustrate several examples of convergence toward a SE/GSE using either RM or RMRL algorithm on the RAT selection game. As can be seen from the graphs, RM converges very quickly towards a pure SE for a low satisfaction threshold of per-user throughput (i.e. 1.5 Mbps). RMRL -- a fully distributed version of RM has similar performance but takes longer time for the game to converge as it uses less information compared to that of RM. Also, increasing the satisfaction threshold requires longer convergence time for both RM and RMRL as evidenced in fig. \ref{fig:2Mbps}. \\

Figure~\ref{fig:RM_GSE_2-2} demonstrates the convergence towards a pure GSE using RM, in which 52 of the 100 users are satisfied with their serving BSs while the remainder of users are unsatisfied but has chosen to remain with their existing actions as they could not get better throughput by switching to any other BS. Figure~\ref{fig:RM_MGSE_2-8} demonstrates the convergence of using RM towards a mixed GSE instead of a pure GSE as illustrated in Figure~\ref{fig:RM_GSE_2-2}. Under this convergence, some of the unsatisfied users stay with a particular BS from their possible actions, while the other unsatisfied users play a mixed strategy according to a converged probability distribution to pick their next actions as the iterations progress.  

\begin{table}[!t]
	\small 
	\begin{center}
		\begin{tabular}{c|cccc}
			Algorithm & Prob (Equil) & Prob (Sat) & Avg. Utility & Alloc. Eff. \\ \hline
			PSEL (unif)  & 0.360 & 0.364 & 0.303 & 0.793 \\
			PSEL (reinf) & 0.120 & 0.348 & 0.291 & 0.772  \\
			PSRA (1) & 0.160 & 0.355 & 0.300 & 0.790    \\
			PSRA (2) & 0.360 & 0.360 & 0.300 & 0.803   \\
			PSRA (10) &  0.360 &  0.363& 0.302  & 0.784 \\
			RMRL & 0.520 & 0.384 & 0.318 & 0.836  \\
			RM &  0.760 &  0.408 &  0.322 & 0.850 \\ 
			\hline
		\end{tabular}
	\end{center}
	\caption{Table shows the performance of PSEL, PSRA, RM and RMRL algorithms for a resource allocation game with 25000 iterations. Here there are $N=30$ agents accessing $N=10$ resources. Average resource capacity = 3.98  and average demand = 3.50. \label{table2}}
\end{table}

\begin{figure*}[!t] 
        \centering
        \begin{subfigure}[b]{0.39\linewidth}
                \includegraphics[width=\linewidth]{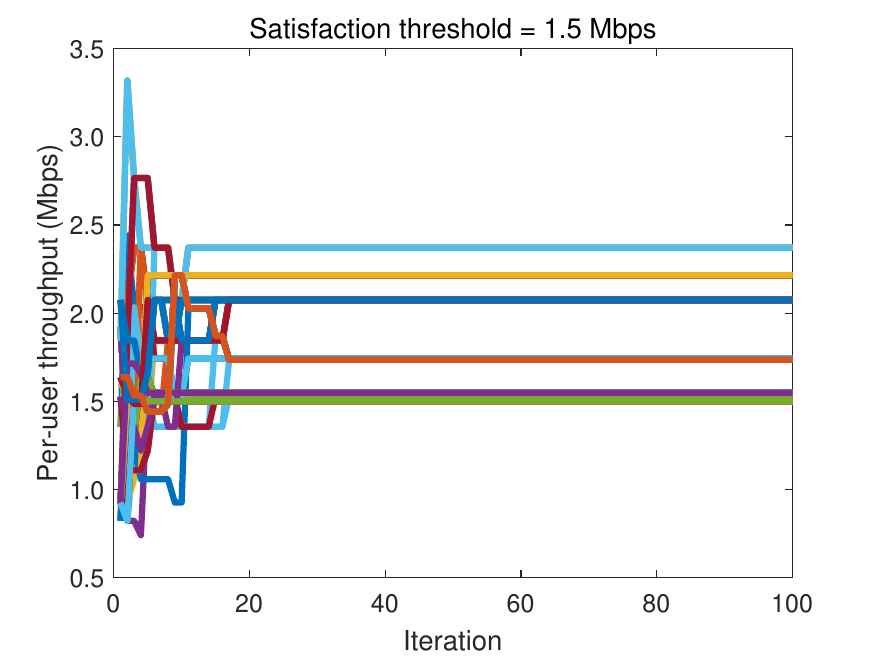}
                \caption{RM leads to a pure SE}
                \label{fig:RM_SE_1-5}
        \end{subfigure}%
        \begin{subfigure}[b]{0.39\linewidth}
                \includegraphics[width=\linewidth]{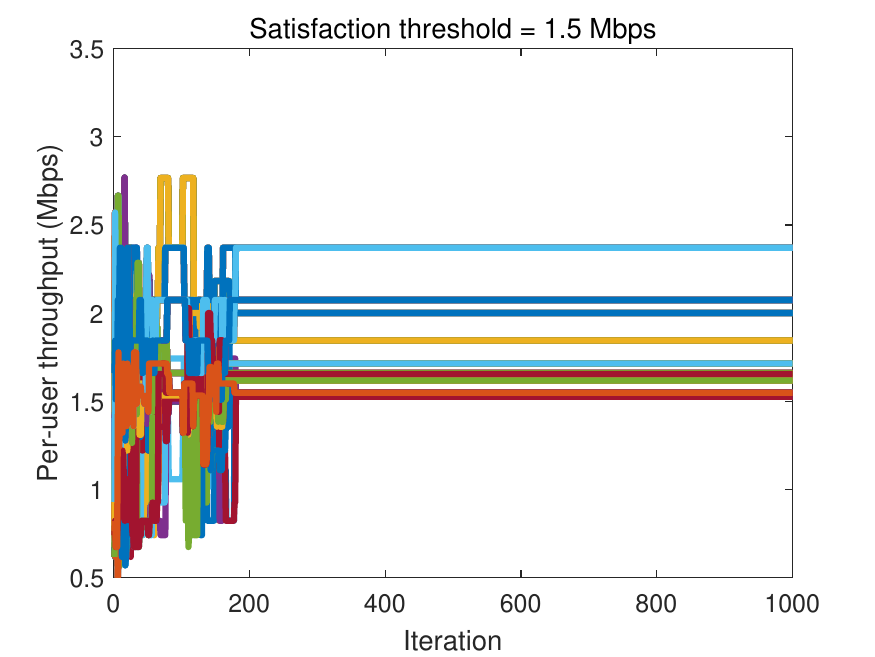}
                \caption{RMRL leads to a pure SE}
                \label{fig:RMRL_SE_1-5}
        \end{subfigure}%
        \caption{(a) Convergence towards a SE using RM (satisfied throughput = 1.5 Mbps); (b) Convergence towards a SE using RMRL (satisfied throughput = 1.5 Mbps).\label{fig:1.5Mbps}}
\end{figure*}
 \begin{figure*}[!htbp]                 
        \begin{subfigure}[b]{0.39\linewidth}
                \includegraphics[width=\linewidth]{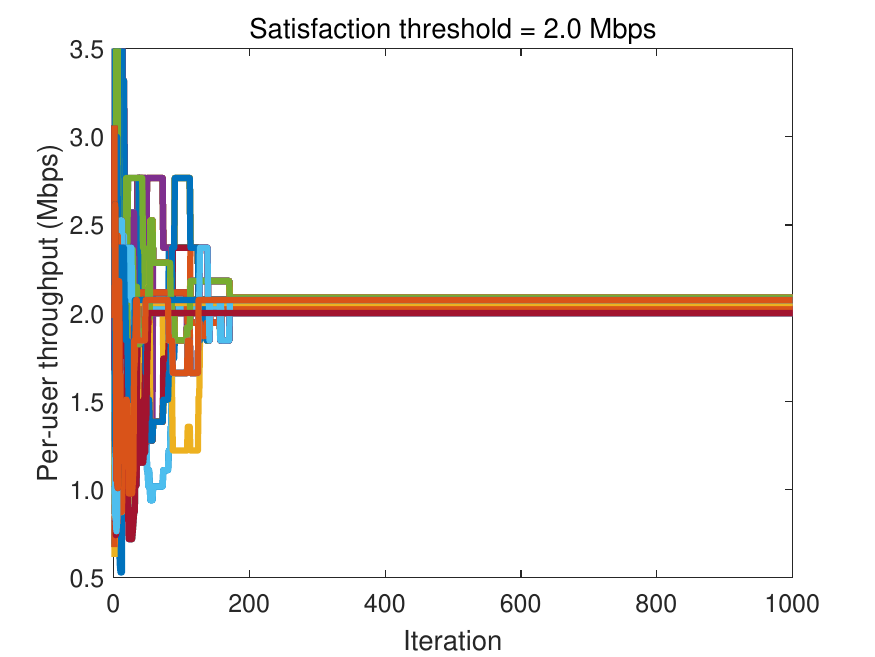}
                \caption{RM leads to a pure SE}
                \label{fig:RM_SE_2-0}
        \end{subfigure}%
        \begin{subfigure}[b]{0.39\linewidth}
                \includegraphics[width=\linewidth]{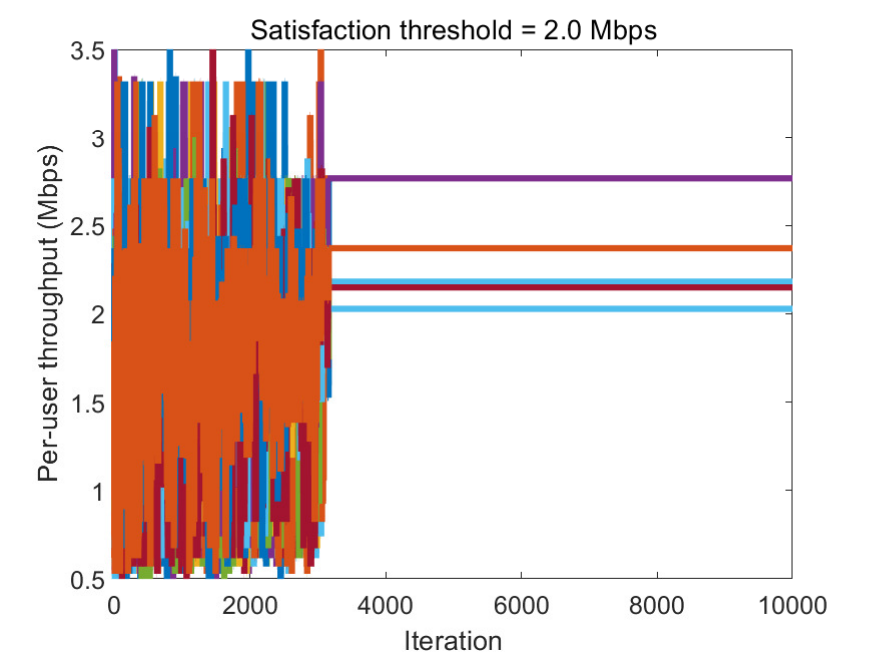}       
                \caption{RMRL leads to a pure SE}
                \label{fig:RMRL_SE_2-0}
        \end{subfigure}%
        \caption{(a) Convergence towards a SE using RM (satisfied throughput = 2.0 Mbps); (b) Convergence towards a SE using RMRL (satisfied throughput = 2.0 Mbps). \label{fig:2Mbps} }
\end{figure*}
\begin{figure*}[!htbp] 
        \begin{subfigure}[b]{0.39\linewidth}
                \includegraphics[width=\linewidth]{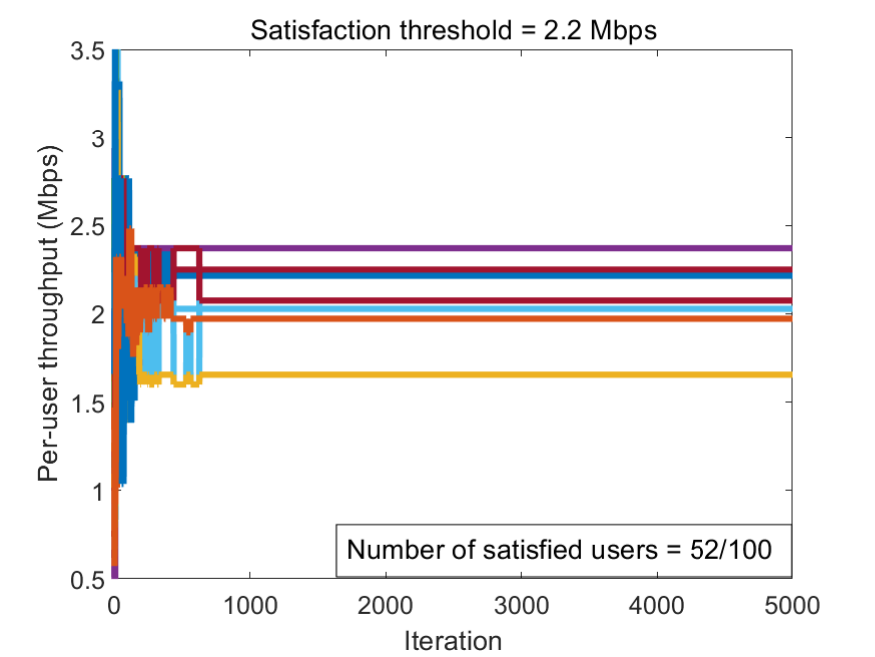}
                \caption{RM leads to a pure GSE}
                \label{fig:RM_GSE_2-2}
        \end{subfigure}%
        \begin{subfigure}[b]{0.39\linewidth}
                \includegraphics[width=\linewidth]{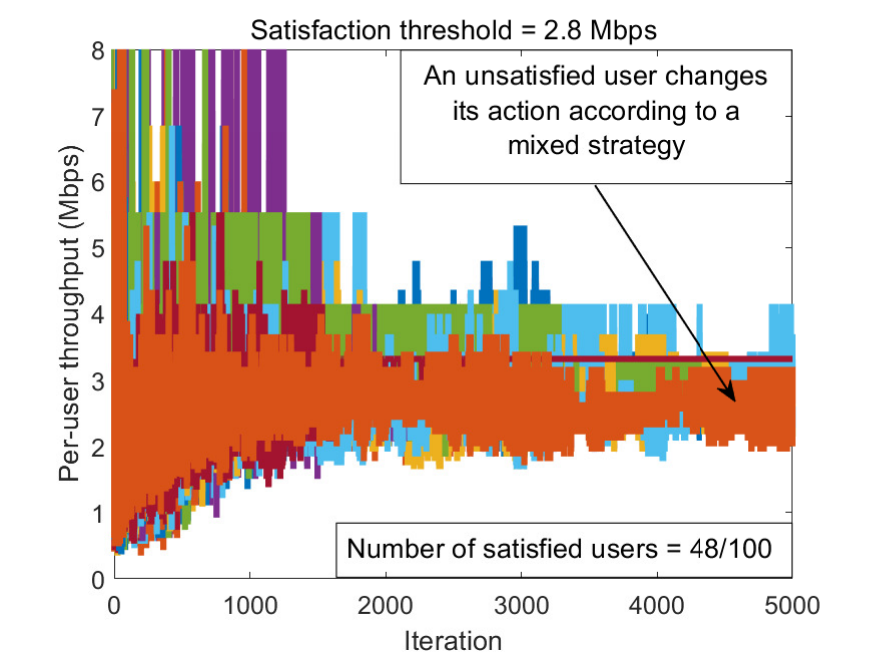}       
                \caption{RM leads to a mixed GSE}
                \label{fig:RM_MGSE_2-8}
        \end{subfigure}%
        \caption{(a) Convergence towards a pure GSE using RM (satisfied throughput = 2.2 Mbps); (b) Convergence towards a mixed GSE using RM (satisfied throughput = 2.8 Mbps).} 
        \label{fig:GSE}
\end{figure*} 
%

\section{Conclusion}\label{sec:concl}

In this technical note, we have introduced a new concept in game theory -- Bayesian satisfaction. We have considered satisfaction games (SG) where players seek to achieve a specified level of utility rather than competing to maximize their individual utilities. We consider a framework within which each player has subjective knowledge of the others' behaviors as specified by a probability distribution over all players' actions. This distribution is common knowledge. Indeed, satisfaction can be defined in a more general setting using correspondence functions that encode constraints between players' actions leading to satisfaction. This formulation is shown to lead naturally to a definition of (mixed) generalized satisfaction equilibria (GSE) for the SG that is consistent with the concept of Bayesian rationality introduced by R. Aumann (Nobel Prize 2005).\\ 

We have shown that correlated equilibria (CE) for the SG have a natural optimality property in that no player can increase its probability of satisfaction by unilateral deviation from its recommended action. The set of CE includes all GSE but also potentially many other equilibria which have not been previously studied. In one example, we compared regret-based algorithms (which find CE) with two existing algorithms from the SG literature. We found that the regret-based algorithms generally perform better in terms of efficiency of resource allocation, but, perhaps more importantly, they almost always find equilibrium outcomes whereas the existing algorithms may not. We also consider the problem of radio resource control in heterogeneous wireless networks which is an application domain where we have made previous contributions. We study the performance of the regret-based algorithms for this problem in the satisfaction framework, and observe good performance for a number of resource constrained scenarios.\\ 

The main role of the paper is to introduce the concept of Bayesian rationality in satisfaction games and to apply regret-based methods to find equilibria. Several numerical examples have been included to illustrate the potential benefits of the new algorithms proposed, but significantly more work is required to fully understand the convergence behavior of these algorithms.
 

\subsection{Future Work}

It is not the purpose of this paper to fully explore the performance of the proposed new algorithms in detail. It is clearly an important issue to further study the nature of CE in SG as well as the performance of the new algorithms proposed here, and these issues are likely to be strongly application dependent. In addition, there are examples of SG where no natural utilities are involved, and it is of interest to study these in the context of Bayesian rationality. \\

Another optimality concept in standard normal form games is called {\it Hannan consistency} \cite{hart2001general,hart2005adaptive}. In the Hannan framework, an agent reasons about whether it would be better to play any fixed action rather than one specified by the equilibrium (mixed) strategy. The Hannan set is the set of all mixed actions for which no agent can gain by choosing any fixed strategy. A {\it Hannan Equilibrium} (HE) is a pmf $\pi \in \Delta(\Ac)$ which, for every player $i$, and every action $a_i \in \Ac_i$ of player $i$, it holds that
\begin{align}
\Er_{\pi|a_i} \lbr u_i(a_i,.) \rbr & \leq \Er_\pi \lbr u_i \rbr \ .
\label{eq:HE1}
\end{align}
That is, under a HE, no player can benefit on average by playing any fixed action. For the natural satisfaction game with binary utilities \eqref{eq:HE1} becomes
\begin{align}
\Pr_{\pi|a_i} \lbr f_i^{-1}(a_i) \rbr & \leq \Pr_\pi \lbr a_i^\prime \in f_i(a_{-i}) \rbr \nonumber \\
& = \sum_{a_i^\prime \in \Ac_i} \pi_i(a_i^\prime) \, \Pr_{\pi|a_i^\prime} \lbr  f_i^{-1}(a_i^\prime) \rbr \ ,
\label{eq:HE_SE}
\end{align}
for all $i \in \Pc$ and $a_i \in \Ac_i$. It is known that the set of all HE contains all CE for the game \cite{hart2001general}. Thus the study of HE in the SG context is also of interest as more equilibria may be present, and convergence of adaptive algorithms may be better behaved. 






\bibliographystyle{ACM-Reference-Format} 
\bibliography{reference}

\newpage \clearpage
\appendix
\section{Appendix}

\subsection{An example with no utilities - satisfactory dating game}

We now consider an example where there is no natural concept of a numerical utility. We suppose there are two kinds of players $\Pc_1 = \lbr \alpha_1, \ldots, \alpha_N \rbr$ and $\Pc_2 = \lbr \beta_1, \ldots, \beta_N \rbr$. Associations are made between a player in $\Pc_1$ and one in $\Pc_2$ and are represented by the ordered pairs $\lb \alpha_i, \beta_j \rb$. For each player $\alpha_i \in \Pc_1$, there is a subset $B_i \subseteq \Pc_2$ for which the association $(\alpha_i,b), b \in B_i$ is ``satisfactory'' for player $\alpha_i$. Similarly, for each player $\beta_j \in \Pc_2$, there is a subset $A_j \subseteq \Pc_1$ for which the association $(a,\beta_j), a \in A_j$ is ``satisfactory'' for player $\beta_j$. Players $\alpha_i \in \Pc_1$ have actions $\Ac_i = \{1, \ldots, N\}$ denoting which member of $\Pc_2$ is ``selected'', i.e. player $\alpha_i$ selecting action $j$ establishes the association $(\alpha_i,\beta_j)$. Similarly for players in $\Pc_2$ which have actions $\Bc_i = \{1, \ldots, N\}$ denoting which member of $\Pc_1$ is ``selected'', i.e. player $\beta_j$ selecting action $i$ establishes the association $(\alpha_i,\beta_j)$. As usual, we define the joint actions $\Ac = \{1,\ldots, N\}^N$ and $\Bc = \{1,\ldots, N\}^N$, along with $\Ac_{-i}$ and $\Bc_{-i}$ in the usual way. A joint action profile is the pair $(a,b)$ with $a \in \Ac$, $b \in \Bc$. Write
\begin{align*}
(a,b) & = \lb n_1, \ldots, n_N, m_1, \ldots, m_N \rb \ ,
\end{align*}
i.e. players $\alpha_i$ select $\beta_{n_i}$, and players $\beta_k$ select $\alpha_{m_k}$.  
The player $\alpha_i \in \Pc_1$ is satisfied under $(a,b)$ if (i) $i \in \{m_1, \ldots, m_N\}$ (i.e. at least one of the players in $\Pc_2$ selects player $\alpha_i$), and (ii) for each player $\beta_k \in \Pc_2$ with $m_k = i$, $k \notin \{ n_1, \ldots, n_{i-1},n_{i+1}, \ldots, n_N\}$, (i.e. if a player $\beta_k$ selects $\alpha_i$, then no other player in $\Pc_1$ part from $\alpha_i$ can also select $\beta_k$. A symmetric property is imposed on players in $\Pc_2$. 

We can define the correspondence functions $f_i^1 : \Ac_{-i} \times \Bc \ra 2^{\Ac_i}$, and $f_j^2 : \Ac \times \Bc_{-j} \ra 2^{\Bc_j}$ by
\begin{align}
f_i^1 (a_{-i},b) & = \{ k : 
\beta_k \in B_i, m_k = i \} \setminus \lb \cup_{j \neq i} \{k : n_j = k \} \rb \nonumber \\
f_j^2(a,b_{-j}) & = \{ i : \alpha_i \in A_j, n_i = j \} \setminus \lb \cup_{k \neq j}  \{i : m_k = i \} \rb \ .
\end{align}
We can think of this example as a version of the well-known dating game, where there is no preference over partners, merely a notion of whether a player is satisfactory as a partner. Each player $\alpha_i$ in $\Pc_1$ is paired with a single player $\beta_j$ in $\Pc_2 \cap B_i$, and $\alpha_i \in A_j$. No player can be paired with more than one player in the other group. At a SE all players are satisfactorily paired. At a GSE, a subset of players are satisfactorily paired and a disjoint subset of players are not so paired and cannot become satisfactorily paired by unilateral deviation from the equilibrium action.


\end{document}